\def\BibTeX{{\rm B\kern-.05em{\sc i\kern-.025em b}\kern-.08em
    T\kern-.1667em\lower.7ex\hbox{E}\kern-.125emX}}
\let\NAT@parse\undefined
\pgfplotsset{compat=newest}
\pgfplotsset{every tick label/.append style={font=\scriptsize}}
\definecolor{codegreen}{rgb}{0,0.6,0}
\definecolor{codepurple}{rgb}{0.58,0,0.82}
\definecolor{backcolour}{rgb}{0.95,0.95,0.92}
\lstdefinestyle{buzz}{
    backgroundcolor=\color{black!5},   
    commentstyle=\color{codegreen},
    keywordstyle=\color{blue},
    numberstyle=\tiny\color{black!30},
    stringstyle=\color{codepurple},
    basicstyle=\footnotesize\ttfamily,
    breakatwhitespace=false,         
    breaklines=true,                 
    captionpos=b,                    
    keepspaces=true,                 
    numbers=left,                    
    numbersep=5pt,                  
    showspaces=false,                
    showstringspaces=false,
    showtabs=false,                  
    tabsize=2,
}
\renewcommand{\vec}[1]{#1}
\newcommand{\xdot}{\dot{x}}
\newcommand{\x}{x}
\renewcommand{\u}{u}
\newcommand{\f}{f}
\newcommand{\g}{g}
\renewcommand{\u}{\vec{u}}
\newcommand{\set}[1]{\mathbb{#1}}
\newcommand{\R}{\mathbb{R}}
\newcommand{\K}{\mathcal{K}}
\newcommand{\gammalb}{\underline{\gamma}}
\newtheorem{theorem}{Theorem}
\newtheorem{corollary}{Corollary}
\newtheorem{definition}{Definition}
\newtheorem{remark}{Remark}
\newtheorem{assumption}{Assumption}
\newcommand{\todo}[1]{\textcolor{black}{#1}}
\title{\LARGE \bf
Optimized Control Invariance Conditions for Uncertain Input-Constrained Nonlinear Control Systems
}
\author{Lukas Brunke, Siqi Zhou, Mingxuan Che, and Angela P. Schoellig
\thanks{The authors are with the 
\href{http://www.learnsyslab.org}{Learning Systems and Robotics Lab} at the Technical University of Munich, Germany and 
the Munich Institute of Robotics and Machine Intelligence~(MIRMI). 
LB and APS are also affiliated with the 
	University of Toronto Institute for Aerospace Studies, the  University of Toronto Robotics Institute, and the Vector Institute for Artificial Intelligence. Emails:
	\{lukas.brunke, siqi.zhou, mingxuan.che, angela.schoellig\}@tum.de}%
}
\begin{document}
\maketitle
\thispagestyle{empty}
\pagestyle{empty}


\begin{abstract}
Providing safety guarantees for learning-based controllers is important for real-world applications. One approach to realizing safety for arbitrary control policies is safety filtering. If necessary, the filter modifies control inputs to ensure that the trajectories of a closed-loop system stay within a given state constraint set for all future time, referred to as the set being positive invariant or the system being safe. Under the assumption of fully known dynamics, safety can be certified using control barrier functions (CBFs). However, the dynamics model is often either unknown or only partially known in practice. Learning-based methods have been proposed to approximate the CBF condition for unknown or uncertain systems from data; however, these techniques do not account for input constraints and, as a result, may not yield a valid CBF condition to render the safe set invariant. In this work, we study conditions that guarantee control invariance of the system under input constraints and propose an optimization problem to reduce the conservativeness of CBF-based safety filters. Building on these theoretical insights, we further develop a probabilistic learning approach that allows us to build a safety filter that guarantees safety for uncertain, input-constrained systems with high probability. We demonstrate the efficacy of our proposed approach in simulation and real-world experiments on a quadrotor and show that we can achieve safe closed-loop behavior for a learned system while satisfying state and input constraints.
\end{abstract}
\begin{IEEEkeywords}
Constrained control, machine learning, robotics, uncertain systems 
\end{IEEEkeywords}


\section{Introduction}
\label{sec:introduction}

\IEEEPARstart{I}{n} recent years, learning-based control techniques have enabled robots to perform complex tasks in uncertain environments. 
However, providing rigorous safety guarantees of learning-based controllers is typically challenging in such settings, which hinders their real-world deployment~\cite{DSL2021}. 
One approach to providing safety guarantees for learning-based controllers is to use a safety filter, which modifies the input to the system when it is deemed unsafe. 

Control barrier function~(CBF)-based safety filters~\cite{Ames2014} are an example of safety filters for continuous-time systems. A CBF-based safety filter requires a CBF which defines the safe set.
The synthesis of a CBF for nonlinear systems typically necessitates considerable offline computation~\cite{Dai2022}. 
However, once the CBF has been computed and under the assumption of control affine dynamics, safety can be efficiently determined online by solving a quadratic program~(QP)~\cite{ames2019a}, for which adequate solvers exist.
Motivated by the goal of achieving safe real-world operation of resource-limited hardware systems, for example, miniature quadrotors, we focus on CBF-based safety filters in this paper due to their reduced online computation. 

\begin{figure}[t]
    \centering
    \includegraphics[width=\columnwidth, height=2.8cm]{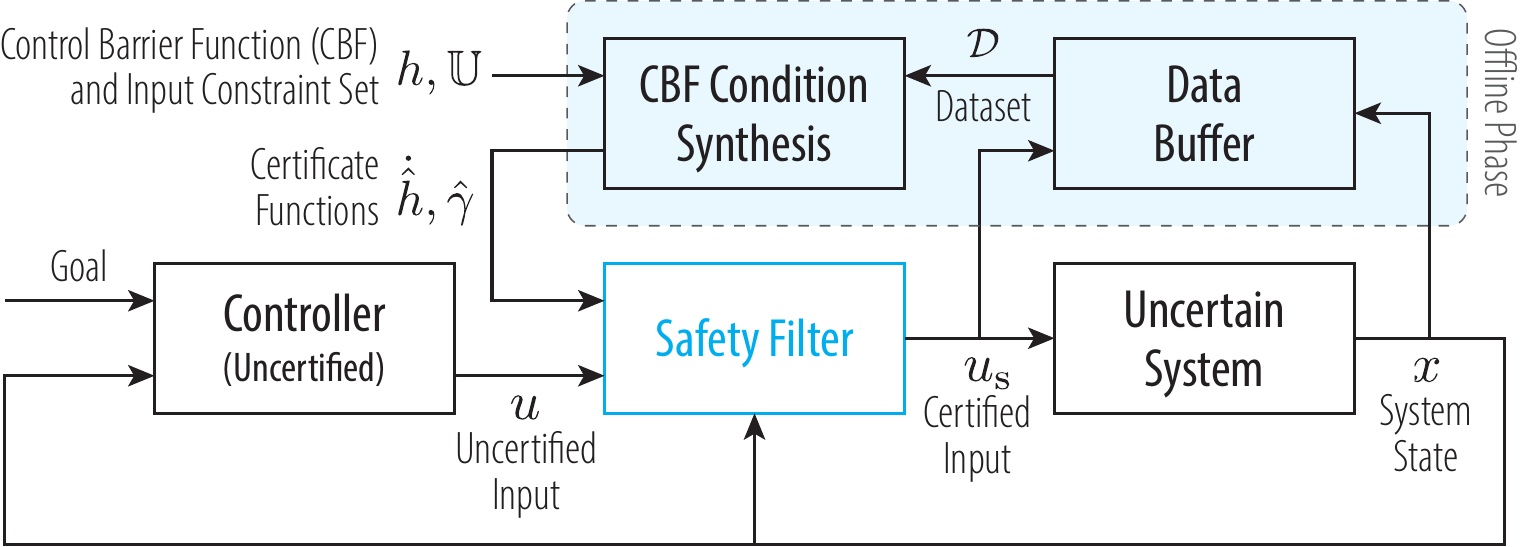}
    \vspace*{-1em}
    \caption{A block diagram of our proposed control barrier function safety filter approach for uncertain, input-constrained systems. 
    Using the theoretical conditions we derived, we can verify that our proposed safety filter satisfies safety under input constraints.  
    We can augment any unsafe controller with our proposed safety filter to guarantee safety with high probability. }
    \label{fig:blockdiagram}
\end{figure}

CBFs provide a scalar condition to determine whether a proposed control input keeps the system in the safe set~\cite{ames2019a}.
The CBF condition is derived from Nagumo's theorem~\cite{Nagumo1942berDL}, which intuitively defines safety by guaranteeing that 
for any state on the safe set's boundary, the system will not leave the safe set. 
CBFs can be synthesized for systems with fully known dynamics ~\cite{mitchell2005time, wang2023a, Dai2022, Wei2023} or
for unknown dynamics~\cite{Qin2022}. 

Real-world systems typically have actuation limits. For safety, these input constraints must be accounted for. Otherwise, the assumption of additional control authority can lead to safety violations. For systems with fully known dynamics, the authors in~\cite{xiao2022, Zeng2021} propose adapting the CBF condition's lower bound online to achieve feasibility under control input constraints.
If control invariance cannot be satisfied due to input constraints, the authors in~\cite{Agrawal2021} propose a method for determining control invariant subsets of the original safe set. 
In~\cite{Wei2023}, the authors study robust control invariance under input constraints for uncertain systems. 

In recent literature, learning-based CBF safety filters have been proposed to guarantee safety for uncertain systems~\cite{DSL2021}.  
Given a CBF, these approaches either directly learn the Lie derivative from data~\cite{taylor2020a, l4dc22} or indirectly learn it by first identifying a dynamics model of the uncertain system~\cite{Wang2018a, Ohnishi2019}.
In~\cite{Wang2018a, l4dc22, Castaneda2021}, probabilistic learning techniques for CBFs have been proposed to 
guarantee the system's safety with high probability. While effective, these works do not account for control input constraints, which may not lead to valid CBF conditions to render the system safe on the original state constraint set. 
Recently, the authors in~\cite{Nejati2023} have considered a data-based approach to synthesize a CBF-based safety controller for an \todo{unknown system with a discrete input set}.
In our work, we learn the system's Lie derivative that satisfies the CBF condition under general compact input constraints with high probability for an uncertain system, \todo{where the dynamics are (partially) unknown}.

Our contributions are as follows:
\textit{(i)}~derivation of  control invariance conditions for generic control-affine nonlinear systems under input constraints, 
\textit{(ii)} optimization-based design of CBF conditions to increase the set of feasible inputs while
guaranteeing control invariance under input constraints, 
\textit{(iii)} a probabilistic learning approach to obtain a CBF safety filter that provides probabilistic safety guarantees for uncertain, input-constrained systems, and \textit{(iv)}  simulation and real-world results demonstrating the efficacy of our approach. 


\section{Problem Formulation}
\label{sec:formulation}
In this work, we consider the control architecture in~\autoref{fig:blockdiagram} and a continuous-time nonlinear control system represented in the following control-affine form:
\begin{equation}
\label{eq:nonlinear_affine_control}
	\xdot = \f(\x) + \g (\x) \:\u\, ,
\end{equation}
where $\x\in \set{X} \subset \R^n$ is the state of the system with $\set{X}$ being the set of admissible states, $\u\in \set{U} \subset \R^m$ is the input of the system with $\set{U}$ being the set of admissible inputs, and  $\f:\R^n\mapsto \R^n$ and $\g:\R^n\mapsto \R^{n\times m}$ are locally Lipschitz continuous functions. We assume that $\set{X}$ and $\set{U}$ are known compact sets, and $f$ and $g$ are (partially) unknown functions.

The safe set $\set{C}\subseteq \set{X}$  is assumed to be given and is defined as the zero-superlevel set of a continuously differentiable function $h: \R^n \to \R$: $\set{C} = \{\x \in \set{X} \: \vert \: h(\x) \geq 0 \} $
where the boundary of the safe set is $\partial\set{C} = \{\x \in \set{X} \: \vert \: h(\x) = 0 \}$ with $\partial h(\x) / \partial \x  \neq 0$ for all $\x \in \partial \set{C}$, and the interior is $\text{Int}(\set{C}) = \{ x \in \set{X} \:\vert\: h(x) > 0\}$. The corresponding unsafe set is defined as $\bar{\set{C}}=\{ x \in \set{X}\:\vert\: h(x) < 0\}$.
Our goal is to augment a given, potentially unsafe state-feedback controller~$\pi(x)$ with a safety filter such that the uncertain system is safe (i.e., the system's state~$\x$ stays inside a safe set $\set{C}$ if it starts inside of $\set{C}$) under the input constraints $\set{U}$. 

\section{Background}
\label{sec:background}

To facilitate our discussion, we introduce relevant definitions and background on CBF safety filters in this section. 
\begin{definition}[Extended class-$\K$ function~\cite{ames2019a}]
\label{def:kappa_inf_extended}
    A function $\gamma: \R \to \R$ is said to be of class-$\K_e$ 
    if it is continuous, $\gamma(0) = 0$, and strictly increasing. 
\end{definition}

\begin{definition}[Positively control invariant set] Let $\mathfrak{U}$ be the set of all bounded controllers $\nu : \R_{\geq 0} \to \set{U}\,$. A set ${\set{C}\subseteq\set{X}}$ is a positively control invariant set for the control system in~\eqref{eq:nonlinear_affine_control} if~$\: \forall \: {\x_0 \in\set{C}} \,,\: \exists \: \nu \in \mathfrak{U} \,, \: \forall \: t \in \set{T}_{\x_0}^+ \,,\: \phi(t, \x_0, \nu) \in \set{C}$, where $\phi(t, \x_0, \nu)$ is the system's phase flow starting at $\x_0$ under the controller $\nu$, and $\set{T}_{\x_0}^+$ is the maximum time interval.
\end{definition}

\begin{definition}[CBF~\cite{ames2019a}]
	Let $\set{C} \subseteq \set{X}$ be the superlevel set of a continuously differentiable function $h: \set{X} \to \R$, then $h$ is a CBF if there exists a class-$\K_e$ function $\gamma$ such that for all $\x \in \set{X}$ the control system in~\eqref{eq:nonlinear_affine_control} satisfies
	\begin{equation}
		\label{eq:cbf_lie_derivative}
		\max_{\u \in \set{U}} \left[L_\vec{f} h(\x) + L_\vec{g} h(\x) \u \right] \geq - \gamma(h(x)) \,,
	\end{equation}
\end{definition}
where $L_f h(x) $ and $L_g h(x)$ are the Lie derivatives of $h$ along $f$ and $g$, respectively. In the following, we write $\dot{h}(x, u) = L_\vec{f} h(\x) + L_\vec{g} h(\x) \u$ for simplicity.

Using a CBF, we can define an input set
\begin{equation}
 \set{U}_{\text{cbf}}(x) = \{u\in\set{U} \:\vert \:  \dot{h}(x,u) \geq - \gamma(h(x))\} \,
\end{equation}
that renders the system safe~\cite{ames2019a}.
This requires choosing a class-$\K_e$ function~$\gamma$ that yields $\set{U}_{\text{cbf}}(x) \neq \emptyset$ for all $x \in \set{C}$. For a controller $\pi(x)$ that is not initially designed to be safe, one can formulate a QP to modify the control input such that the system is guaranteed to be safe~\cite{ Ames2014}
\begin{subequations}
	\label{eqn:cbf_qp}
	\begin{align}
	\u_\text{s}(\x) = \underset{\u \in \set{U}}{\text{argmin}} & \quad \frac{1}{2} \lVert \u - \pi(x) \rVert_2^2 \label{eqn:cbf_qp_cost} \\ \text{s.t.} & \quad \dot{h}(x, u) \geq - \gamma(h(\x))\,. \label{eqn:cbf_constraint}
	\end{align}
\end{subequations}
Intuitively, the optimization problem in~\eqref{eqn:cbf_qp} finds an input in $\set{U}_\text{cbf}(x)$ that best matches $\pi(x)$, where the closeness of the inputs is specified with respect to a chosen distance measure~(e.g., the Euclidean norm in~\eqref{eqn:cbf_qp_cost}).
\section{Control Invariance Under Input Constraints}
\label{sec:input-constraints}
In order for a continuously differentiable function $h$ to be a valid CBF, we need to find a $\gamma\in\K_{e}$ such that~\eqref{eq:cbf_lie_derivative} holds. In general, finding such a class-$\K_e$ function under input constraints is not trivial. In~\cite{Zeng2021}, the authors proposed a method that scales a pre-selected class-$\K_e$ function online to guarantee that the safety filter optimization problem in~\eqref{eqn:cbf_qp} is feasible to solve. However, as noted by the authors, this 
does not guarantee control invariance of the safe set but rather only point-wise feasibility inside the safe set.

In this section, we assume fully known functions~$f$ and~$g$ to derive conditions for control invariance of the CBF's superlevel sets under compact input constraints $\set{U}$. Based on these conditions, we further propose a method to increase the size of feasible control input sets. We will then extend the discussion to uncertain systems in \autoref{sec:learning_prob_cont_inv_cond}.

\subsection{Valid Class-$\K_e$ Functions Under Input Constraints}
\label{subsec:valid_class_kapp_in_cont}

In this subsection, we devise a systematic approach to determine a class-$\K_e$ function for a given $h$ that satisfies \eqref{eq:cbf_lie_derivative} under input constraints $\set{U}$. 

Before we start, we first note that, in order for the problem to be well-posed, we need to check whether there exists an input~$u$ to make the Lie derivative $\dot{h}(x,u)$ non-negative for all $x$ on the boundary of the safe set~$\set{C}$:
\begin{equation}
\label{eqn:boundary_condition}
    \forall x \in \partial\set{C}, \exists u \in \set{U}, \text{~s.t.~} \dot{h}(x, u) \geq 0. 
\end{equation}
If the condition in \eqref{eqn:boundary_condition} is not satisfied for some $x\in\partial \set{C}$, then $\dot{h}$ will be negative at the particular $x$, which implies that the system will leave the safe set.
In this case, we cannot render the system safe, and the problem of finding a class-$\K_e$ function satisfying \eqref{eq:cbf_lie_derivative} will be infeasible.

In addition to the constraint on the Lie derivative at the boundary,~\eqref{eq:cbf_lie_derivative} requires the Lie derivative to be lower-bounded by a negative class-$\K_e$ function $-\gamma(h(x))$ for all $x \in \set{X}$.
The class-$\K_e$ function needs to be chosen such that there exists a $u \in \set{U}$ such that the constraint in~\eqref{eq:cbf_lie_derivative} is feasible for all $x \in \set{X}$. In order to facilitate the design of the class-$\K_e$ function, we derive a condition that ensures the feasibility of \eqref{eq:cbf_lie_derivative}.

Note that, given the compact input set~$\set{U}$, we can determine a maximum realizable Lie derivative $\dot{h}$ for each $x\in \set{X}$. We define the lower bound on the maximum realizable Lie derivative for each level set $\set{C}_c = \{ x \in \R^n~\vert ~ h(x) = c \}$ as
\begin{equation}
\label{eq:gamma-constraint}
    \gammalb (c) = - \min_{x \in \set{C}_c} \max_{u \in \set{U}} \dot{h}(x, u) \,,
\end{equation}
where $c \in \left[h_{\text{min}}, h_{\text{max}}\right]$ with $ h_{\mathrm{min}}$ and $h_{\mathrm{max}}$ corresponding to the largest and the smallest level sets of~$h$ contained in the compact set~$\set{X}$, respectively.
The following theorem then specifies a condition on $\gamma\in\K_e$ that guarantees the satisfaction of \eqref{eq:cbf_lie_derivative} for a given $h$ over the largest superlevel set of~$h$ contained in~$\set{X}$~(i.e., $\set{X}_{\text{max}} = \{ x \in \set{X}~\vert ~ h(x) \geq h_{\text{min}} \}$).

\begin{theorem}
\label{th:gamma-lower-bound}
Consider the CBF candidate $h(x)$ and the dynamics defined in~\eqref{eq:nonlinear_affine_control} with control input $u \in \set{U}$.
    Let $\gamma$ be a class-$\K_e$ function and $\gammalb(c)$ be the lower bound on the maximum realizable Lie derivative for a level set~$\set{C}_c$~(see~\eqref{eq:gamma-constraint}). 
    If $\gamma$ satisfies $\gammalb (c) \leq \gamma(c)$ for all $c\in\left[h_{\text{min}}, h_{\text{max}}\right]$, then~\eqref{eq:cbf_lie_derivative} is satisfied for all $x \in \set{X}_{\text{max}}$.     
\end{theorem}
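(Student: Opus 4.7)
The plan is to unwind the definitions and chain two simple inequalities together. Concretely, for an arbitrary $x \in \set{X}_{\text{max}}$, set $c = h(x)$. By definition of $\set{X}_{\text{max}}$, we have $c \in [h_{\text{min}}, h_{\text{max}}]$, and trivially $x \in \set{C}_c$. So $x$ is a feasible point in the inner minimization defining $\gammalb(c)$.

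The first step is to observe, directly from the definition in \eqref{eq:gamma-constraint}, that
\begin{equation*}
\max_{u \in \set{U}} \dot{h}(x, u) \;\geq\; \min_{x' \in \set{C}_c} \max_{u \in \set{U}} \dot{h}(x', u) \;=\; -\gammalb(c),
\end{equation*}
since $x$ is one of the candidates over which the minimum is taken. The second step is to invoke the hypothesis of the theorem, $\gammalb(c) \leq \gamma(c)$, which after negation gives $-\gammalb(c) \geq -\gamma(c) = -\gamma(h(x))$.

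Combining the two inequalities yields $\max_{u \in \set{U}} \dot{h}(x, u) \geq -\gamma(h(x))$, which is exactly \eqref{eq:cbf_lie_derivative} at the point $x$. Since $x \in \set{X}_{\text{max}}$ was arbitrary, the claim holds on all of $\set{X}_{\text{max}}$.

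Because the argument is essentially a definition unpacking, I do not anticipate any real obstacle; the only subtlety worth stating carefully is why $x \in \set{C}_c$ for $c = h(x)$ (immediate from the definition of $\set{C}_c$ as a level set) and why $c$ lies in the parameter range $[h_{\text{min}}, h_{\text{max}}]$ over which the hypothesis is assumed (because $\set{X}_{\text{max}}$ was defined precisely as the set of states with $h(x) \geq h_{\text{min}}$, with $h_{\text{max}}$ being the upper end attained on the compact domain). Both points are brief one-liners. No compactness or continuity arguments beyond those already used to ensure the min–max in \eqref{eq:gamma-constraint} is well defined are required.
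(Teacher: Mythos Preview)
Your proof is correct and follows essentially the same approach as the paper: both arguments chain the inequality $\max_{u \in \set{U}} \dot{h}(x,u) \geq -\gammalb(h(x))$ from the definition of $\gammalb$ with the hypothesis $\gammalb(c) \leq \gamma(c)$ to obtain \eqref{eq:cbf_lie_derivative}. The only cosmetic difference is that the paper first fixes a level set $\set{C}_c$ and then ranges over all $c$, whereas you start from an arbitrary $x \in \set{X}_{\text{max}}$ and set $c = h(x)$; the logic is identical.
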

\begin{proof}
    We first consider a level set $\set{C}_c = \{ x \in \R^n~\vert ~ h(x) = c \}$ with $c\in\left[h_{\text{min}}, h_{\text{max}}\right]$. For all $x \in \set{C}_c$, we have $\max_{\u \in \set{U}} \dot{h}(x, u) \geq \min_{x \in \set{C}_c} \max_{u \in \set{U}} \dot{h}(x, u)=- \gammalb (c) = - \gammalb (h(x))$. 
     It follows that, if $\gammalb (c) \leq \gamma(c)$,  then $ - \gammalb (h(x)) \geq - \gamma(h(x))$ and  $\max_{\u \in \set{U}} \dot{h}(x, u) \geq - \gammalb (h(x)) \geq - \gamma(h(x))$ for all $x \in \set{C}_c$. Since $h$ is continuously differentiable, $\set{X}$ is compact, and $c \in \left[h_{\text{min}}, h_{\text{max}}\right]$, the condition $\max_{\u \in \set{U}} \dot{h}(x, u) \geq - \gamma(h(x))$ holds for all~$x\in \set{X}_{\text{max}}$.
\end{proof}
 
The definition of $\gammalb$ also allows us to identify control invariant superlevel sets of $h(x)$.
They are given as follows: 
\begin{corollary}
\label{th:superlevel-set-invariance}
    Consider the CBF candidate $h(x)$, the dynamics defined in~\eqref{eq:nonlinear_affine_control} with control input $u \in \set{U}$, and $\gammalb$ as defined in~\eqref{eq:gamma-constraint}. Then, for $c \in \left[ h_{\text{min}}, h_{\text{max}} \right]$, $\gammalb (c) \geq 0$ implies control invariance of~$\Tilde{\set{C}}_c = \{x \in \set{X} \:\vert ~ h_c(x) = h(x) - c \ge 0 \}$.  
\end{corollary}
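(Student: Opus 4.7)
The plan is to reduce the claim to Nagumo's sub-tangentiality condition, which is the same principle underlying the CBF framework of \autoref{sec:background}. The first observation is that $h_c(x) = h(x) - c$ differs from $h$ only by a constant, so the Lie derivatives coincide, $\dot{h}_c(x,u) = \dot{h}(x,u)$, and the boundary of $\tilde{\set{C}}_c$ inside $\set{X}$ is precisely the level set $\set{C}_c$. Consequently, by Nagumo's theorem for control systems, positive control invariance of $\tilde{\set{C}}_c$ follows as soon as I establish the pointwise boundary condition that for every $x \in \set{C}_c$ there exists $u \in \set{U}$ with $\dot{h}(x,u) \ge 0$, equivalently $\max_{u \in \set{U}} \dot{h}(x,u) \ge 0$.

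Next I would rewrite this uniform boundary condition as a single scalar inequality using \eqref{eq:gamma-constraint}. The condition $\max_{u \in \set{U}} \dot{h}(x,u) \ge 0$ holds for every $x \in \set{C}_c$ if and only if $\min_{x \in \set{C}_c} \max_{u \in \set{U}} \dot{h}(x,u) \ge 0$, which by the definition of $\gammalb$ is exactly $-\gammalb(c) \ge 0$. Under the sign convention fixed in \eqref{eq:gamma-constraint}, this is precisely the hypothesis of the corollary, so Nagumo's condition is immediately satisfied on all of $\partial \tilde{\set{C}}_c$. Note that, when \autoref{th:gamma-lower-bound} yields a valid class-$\K_e$ function on the sublevel range $[c,h_{\text{max}}]$, the same bound can be reused here: the restriction of any admissible $\gamma$ to $[c,h_{\text{max}}]$ provides a ready-made safe feedback on $\tilde{\set{C}}_c$.

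The step that requires the most care is upgrading pointwise existence of a safe input at each boundary point to the existence of an admissible controller $\nu \in \mathfrak{U}$ whose closed-loop trajectories remain in $\tilde{\set{C}}_c$ for all $t \in \set{T}_{x_0}^+$. For control-affine dynamics with compact $\set{U}$ and continuous Lie derivatives, the set-valued map $x \mapsto \{u \in \set{U} : \dot{h}(x,u) \ge 0\}$ is upper hemicontinuous with nonempty, convex, compact values on a neighborhood of $\set{C}_c$, so a measurable (in fact continuous) selection exists by standard selection theorems (e.g., Michael's selection theorem). This selection-theoretic step is the main technical obstacle; I would cite it rather than re-derive it, after which control invariance of $\tilde{\set{C}}_c$ follows.
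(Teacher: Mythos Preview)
Your argument is correct and follows the paper's approach exactly: both unpack the hypothesis on $\gammalb(c)$ via~\eqref{eq:gamma-constraint} to obtain $\max_{u\in\set U}\dot h(x,u)\ge 0$ for every $x$ on the level set $\set C_c=\partial\tilde{\set C}_c$, and then invoke standard CBF/Nagumo invariance arguments. The paper's proof is a two-sentence version of yours that defers to~\cite{ames2019a} in place of your explicit discussion of the shift $h_c=h-c$, Nagumo's condition, and the measurable-selection step.
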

\begin{proof}
    According to~\eqref{eq:gamma-constraint}, $\gammalb (c) \geq 0$ implies that, for all states in the level set $\set{C}_c$, there exists $u \in \set{U}$ such that the Lie derivative $\dot{h}(x,u)$ is non-negative. Then, control invariance follows from standard CBF arguments~\cite{ames2019a}.
\end{proof}

Alternatively, for a known system, a subset $\hat{\set{C}} \subset \set{C}$ that is control invariant under the control input constraint set $\set{U}$ can be found using the approach proposed in~\cite{Agrawal2021}. 

In practice, to compute $\gammalb (c)$ for any $c\in [h_{\text{min}}, h_{\text{max}}]$, we can rewrite the maximization problem in~\eqref{eq:gamma-constraint} in the epigraph form~\cite{boyd2004convex}, which yields~\eqref{eq:gamma-constraint}
$ - \gammalb (c) = \min_{\rho \in \R, x \in \set{C}_c} \rho \,, \text{s.t.} \,,  \dot{h}(x, u) \leq \rho, \forall u \in \set{U} \,.$
This optimization is a nonlinear program that 
is typically hard to solve. 
It is possible to reduce its complexity by removing the level set constraint and approximating the optimization in two steps: First, determine $\rho^{j, *}_c = \min_{\rho \in \R} \rho \,, \text{s.t.} \,,  \dot{h}(x^j, u) \leq \rho, \forall u \in \set{U} \,,$ for a set of $N > 0$ samples $\{x^j \}_{j = 1}^N$ with $x^j \in \set{C}_c$.
Then approximate $-\gammalb(c) \approx \min_{j} \rho^{j, *}_c$. Choosing $N$ sufficiently large yields the desired accuracy of the approximation. 
In the special case where $\set{U}$ is a convex polytopic set, we can leverage the control-affine structure in~\eqref{eq:nonlinear_affine_control} and replace $\forall u \in \set{U}$ by $\forall u^i \in \text{vertices}(\set{U})$ (i.e., enforcing the constraint on all the vertices of~$\set{U}$). 
The above procedure applies to a single level set $\set{C}_c$. To approximate $- \gammalb (c)$ for all $c \in \left[h_{\text{min}}, h_{\text{max}} \right]$, we can repeat this procedure for a discrete set of $c^k \in \left[h_{\text{min}}, h_{\text{max}} \right]$, where $k \in \{1, \dots, K \}$. 
Choosing $K$ sufficiently large yields the desired accuracy over all level sets. %
The computational complexity of this procedure scales exponentially with the number of state and input dimensions.

\subsection{Increasing Safe Control Input Set Size}
In \autoref{subsec:valid_class_kapp_in_cont}, we presented conditions for the existence of $u \in \set{U}$ to render the system positively invariant on $\set{C}$. 
Now we turn to the problem of finding a class-$\K_e$ function that enlarges the size of the feasible control input set $\set{U}_{\text{cbf}}(x)$. This is motivated by using the CBF condition as a constraint in a minimally invasive safety filter. Ideally, the CBF safety filter should not modify a control input $u$ unless control invariance cannot be guaranteed. Formally, an uncertified control input $u$ will not be modified if $u \in \set{U}_{\text{cbf}}(x)$ for $x \in \set{X}$. We note that the set $\set{U}_{\text{cbf}}(x)$ is dependent on the choice of the class-$\K_e$ function. Therefore, it is desirable to use a class-$\K_e$ function in the CBF condition that allows the largest possible set of safe control inputs at each state~$\set{U}_{\text{cbf}}(x)$. 
An enlargement of the admissible control input set can be achieved by any class-$\K_e$ function that upper bounds $\gammalb$. Especially, any class-$\K_e$ function that approaches $+\infty$ for all $c \in \left(0, h_{\text{max}}\right]$ will satisfy the maximal feasible control input set. However, for practical purposes, it is desirable to design a class-$\K_e$ function that reflects the input constraints.
Towards this goal, we propose a target $\gamma_{\text{target}}$ for $\gamma \in \K_e$ to match in addition to satisfying the constraint in~\autoref{th:gamma-lower-bound} 
\begin{equation}
\label{eqn:gamma_target_opt}
    \gamma_{\text{target}} (c) = - \min_{x \in \set{C}_c, u \in \set{U}} \dot{h}(x, u) \,
\end{equation}
for all $c \in \left[h_{\text{min}}, h_{\text{max}}\right]$. 
The value of $-\gamma_{\text{target}}$ determines the minimum value of $\dot{h}(x, u)$ that can be achieved by
at least one state in the level set $\set{C}_c$ and one control input $u \in \set{U}$.
The constraint on $\gamma$ and the proposed target $\gamma_{\text{target}}$ can help us design a class-$\K_e$ function that optimally accounts for the input constraints. To make the design problem tractable, we consider a parameterized class-$\K_e$ function $\gamma_\theta$, where $\theta \in \Theta$ are the function's parameters.
The class-$\K_e$ function design problem is then formulated as follows:
\begin{subequations}
\label{eq:kappa_opt}
    \begin{align}
        \theta^* = \underset{\theta \in \Theta}{\text{argmin}} & \int_{h_{\text{min}}}^{h_{\text{max}}} \lVert \gamma_{\theta}(c) - \gamma_{\text{target}} (c) \rVert^2_2 \: dc \label{eq:kappa_opt_cost}\\
        \text{s.t.} & \quad \gamma_{\theta}(c) \geq \gammalb (c) \,, \forall c \in \left[h_{\text{min}}, h_{\text{max}}\right] \,. \label{eq:kappa_opt_constraint}
    \end{align}
\end{subequations}
In the case of a linear parameterization, we recover an offline version of the presented approaches in~\cite{xiao2022, Zeng2021}. However, 
our approach aims to certify a CBF candidate function~$h$ as a CBF with a corresponding class-$\K_e$ function \textit{a priori} such that online adaptation for the feasibility of the CBF condition under input constraints is not required. 
We emphasize that skipping the verification of a CBF under input constraints and solely relying on an online adaptive approach can lead to safety violations. 
This is due to~\eqref{eqn:boundary_condition} being independent of the choice of the class-$\K_e$ function. Then, adapting its steepness has no impact. Therefore, the adaptive approach can only be safely applied if the safe set~$\set{C}$ has previously been verified as control invariant under~$\set{U}$.
We demonstrate the insufficiency of the adaptive approach in a simulation example in~\autoref{sec:Simulation}.
Furthermore, the softening of the CBF condition~(as in~\cite{xiao2022, Zeng2021}) may not be desirable in practice, as
the class-$\K_e$ function also impacts the implementation of the CBF safety filter in a discrete-time setting. In discrete time, it is undesirable to approach the boundary too quickly to avoid an overshooting of the safe set boundary. Additional constraints on the slope around the origin can be incorporated for $\gamma_{\theta}$ in~\eqref{eq:kappa_opt}. 

\section{Learning Probabilistic Control Invariance Conditions under Input Constraints}
\label{sec:learning_prob_cont_inv_cond}
In this section, we address the issue of guaranteeing control invariance of the safe set $\set{C}$ under control input constraints when the dynamics are uncertain.

\subsection{Constructing Probabilistic Control Invariance Conditions Under Input Constraints}
\label{sec:probabilistic}
 We formulate a probabilistic learning approach to approximate the Lie derivative $\dot{h}(x,u)$.
 The probabilistic model could, for example, be Gaussian processes~(GPs) or ensembles of deep neural networks~\cite{DSL2021}. 

\begin{assumption}
\label{as:well-calibrated} We assume that the probabilistic model for approximating $\dot{h}(x,u)$ is well-calibrated~(i.e., has reliable uncertainty bounds)
    with $\mu(x, u) = \bar{\alpha}(x) + \bar{\beta}(x) u$ and $\sigma^2(x, u)$ denoting the posterior mean and variance functions of the probabilistic model approximating the Lie derivative $\dot{h}(x, u)$, respectively. 
    We further assume that, for any $s > 0$, there exists a  $\delta \in \left(0, 1 \right]$ such that   
    $\text{Pr}(\dot{h}(x, u) \geq \mu(x, u) - s \sigma(x, u)) \ge 1-\delta$ holds for all $(x, u) \in \set{X}\times \set{U}$.
\end{assumption}

We note that \autoref{as:well-calibrated} is not restrictive in practice. If the distribution's probability density function (PDF) is known, we can generally derive a tight confidence bound $(1-\delta)$ for a given $s>0$ by directly integrating the PDF. For a distribution with an unknown PDF but with a finite mean~$\mu$ and a finite non-zero standard deviation~$\sigma$, we can relate~$\delta$ and $s$  as  $\delta(s)=1/s^2$ (for $s>1$ cases) by applying Chebyshev's inequality~\cite{billingsley2012probability}. As an example, with Chebyshev's inequality, we can rewrite the confidence bound in \autoref{as:well-calibrated} as $\text{Pr}(\dot{h}(x, u) \geq \mu(x, u) - s \sigma(x, u)) \ge 1-1/s^2$, $\forall (x, u) \in \set{X}\times \set{U}$.
 Using~\autoref{as:well-calibrated}, we can construct different probabilistic lower bounds of $\dot{h}$ by scaling the standard deviation with an appropriate $s > 0$. In~\autoref{subsec:quadrotor_experiments}, we outline steps to verify this assumption empirically.

Following the discussion in~\autoref{sec:input-constraints}, a feasible control invariance condition with probability of at least $(1 - \delta)$ requires that, for all $x \in \partial \set{C}$, there exists an input $u \in \set{U}$ such that 
\begin{equation}
\label{eqn:prob_feasibility}
    \bar{\alpha}(x) + \bar{\beta}(x) u - s \sigma(x, u) \geq 0
\end{equation}
for some $s > 0$ (see the well-learned case in~\autoref{fig:cbf-level-set-invariance}). 
Intuitively,~$s$ represents the number of standard deviations away from the mean such that the CBF condition holds.
If the standard deviation $\sigma$ at the boundary is large, then we require a smaller number of standard deviations $s$ to ensure that there exists a control input $u$ satisfying~\eqref{eqn:prob_feasibility} 
with probability~$(1 - \delta)$ at any time instance. If no $s >0$ satisfies~\eqref{eqn:prob_feasibility} or the largest~$s$ satisfying \eqref{eqn:prob_feasibility} yields an undesirably low probability~$(1-\delta)$, then additional data needs to be collected to reduce the predictive uncertainty of the learned Lie derivative model.%
\begin{figure}[t]
    \centering
    \includegraphics[width=\columnwidth, height=5.0cm]{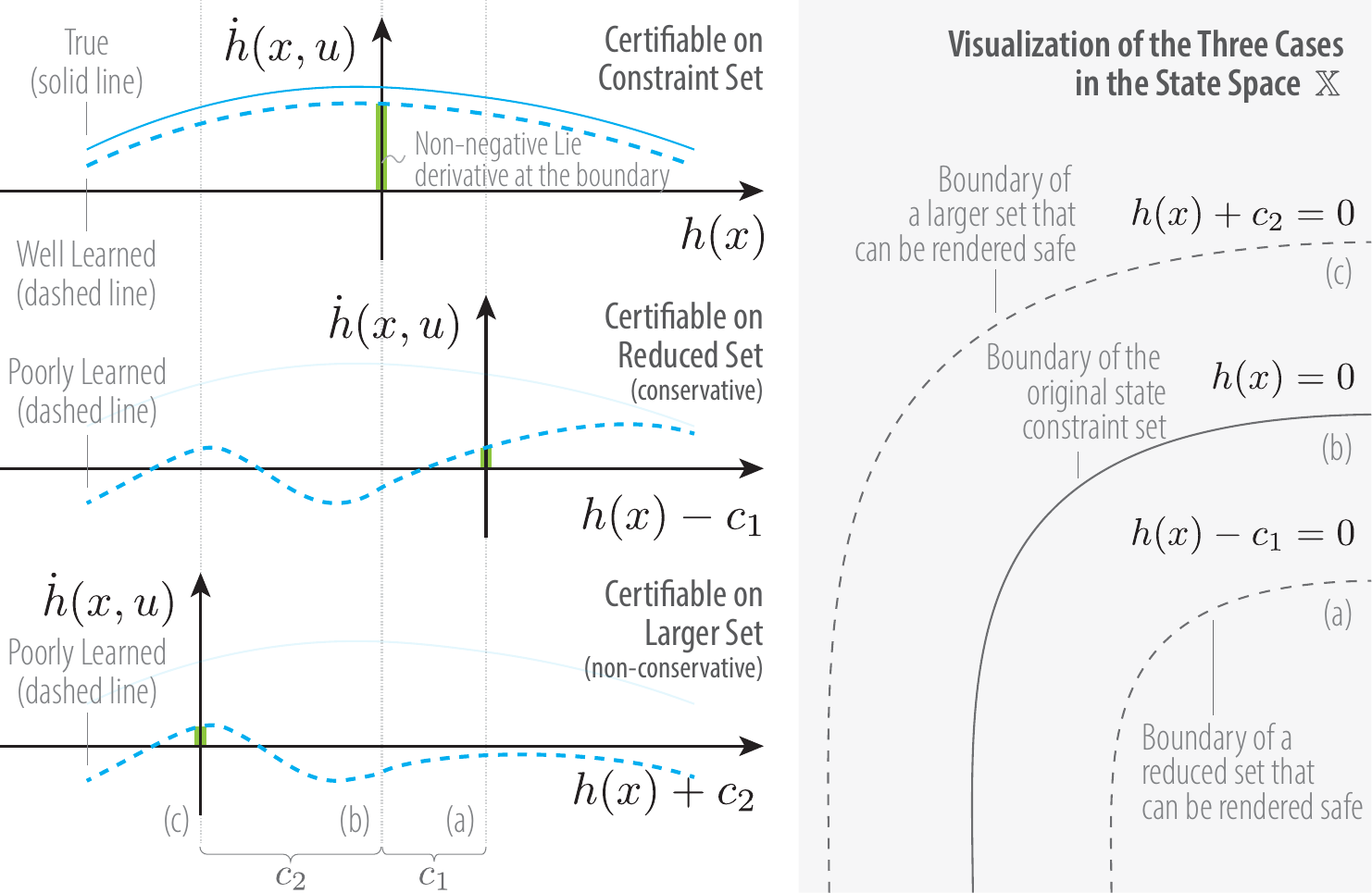}
    \vspace*{-5mm}
    \caption{Illustration showing which superlevel sets of $h(x)$ can be rendered control invariant. The solid blue line represents the true Lie derivative~$\dot{h}(x, u)$, and the dashed blue line represents a learned lower bound $\mu(x, u) - s \sigma(x, u)$. If there exists at least one control input such that a non-negative Lie derivative lower bound can be achieved for all states in a level set $\set{C}_c$, then we can guarantee control invariance with high probability. 
    }
    \label{fig:cbf-level-set-invariance}
\end{figure}
Once the desired predictive accuracy is achieved, the next step is determining a class-$\K_e$ function $\hat{\gamma}$ for the estimated Lie derivative under input constraints. As the true Lie derivative is unknown, we use probabilistic lower bounds of the Lie derivative to provide safety guarantees. We can leverage the approach for finding class-$\K_e$ functions under input constraints presented in~\autoref{sec:input-constraints}. Substituting $\mu(x, u) - s \sigma(x, u)$ for $\dot{h}(x, u)$ when finding $\gammalb$ and $\gamma_{\text{target}}$ leads to an optimization problem as formulated in~\eqref{eq:kappa_opt} for $\hat{\gamma}$.
Given the mean~$\mu$ and the standard deviation~$\sigma$, 
we provide a sufficient condition for satisfying the CBF condition for the true system on~$\set{C}$ with high probability.
\begin{theorem}
\label{th:prob-cbf}
Consider the CBF candidate $h(x)$, the input constraint set~$\set{U}$, and a chosen $s>0$.
Under~\autoref{as:well-calibrated}, if the estimated lower bound of the true Lie derivative $\mu(x, u) - s \sigma(x, u)$ is non-negative for $x \in \partial \set{C}$, then there exists $\hat{\gamma} \in \K_e$ such that 
\begin{equation}
    \label{eq:prob-cbf-constraint}
        \max_{u \in \set{U}} \left[ \mu(x, u) - s \sigma(x, u) \right] \geq - \hat{\gamma}(h(x))\,, \forall x \in \set{C} \,
    \end{equation}
holds. Moreover, we can guarantee that the CBF condition $\dot{h}(x, u) \geq -\hat{\gamma}(h(x) )$ is satisfied by~\eqref{eq:nonlinear_affine_control} for some input $u\in\set{U}$ at any time instance with a probability of at least~$(1 - \delta)$. 
\end{theorem}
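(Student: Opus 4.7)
The plan is to handle the theorem in two stages: first, establish the existence of the class-$\K_e$ function $\hat{\gamma}$ satisfying the inequality in \eqref{eq:prob-cbf-constraint}, and second, translate this deterministic inequality on the learned lower bound into a probabilistic guarantee on the true Lie derivative via Assumption~\ref{as:well-calibrated}.

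For the existence part, I would proceed by direct analogy with Theorem~\ref{th:gamma-lower-bound}, replacing $\dot{h}(x,u)$ everywhere by the probabilistic lower bound $\mu(x,u) - s\sigma(x,u)$. Concretely, I would define
\begin{equation*}
\hat{\gammalb}(c) = -\min_{x \in \set{C}_c}\max_{u \in \set{U}}\bigl[\mu(x,u) - s\sigma(x,u)\bigr]
\end{equation*}
for $c \in [0,h_{\max}]$. The hypothesis that $\mu(x,u) - s\sigma(x,u)$ can be made non-negative for every $x\in\partial\set{C}$ is exactly the probabilistic analogue of the well-posedness condition~\eqref{eqn:boundary_condition}, and it yields $\hat{\gammalb}(0) \le 0$. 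This is essential, since any $\hat{\gamma}\in\K_e$ must satisfy $\hat{\gamma}(0)=0$, so we need $\hat{\gammalb}(0) \le 0$ for the constraint $\hat{\gamma}(c) \ge \hat{\gammalb}(c)$ to be feasible at $c=0$. Given this, one can always construct a continuous, strictly increasing $\hat{\gamma}$ (e.g., by solving the surrogate optimization~\eqref{eq:kappa_opt} with $\hat{\gammalb}$ in place of $\gammalb$, or simply choosing a sufficiently large linear function) that upper-bounds $\hat{\gammalb}$ on $[0,h_{\max}]$. Invoking Theorem~\ref{th:gamma-lower-bound} with $\mu-s\sigma$ in place of $\dot{h}$ then delivers~\eqref{eq:prob-cbf-constraint} on $\set{C}$.

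For the probabilistic guarantee, I would argue pointwise in $x\in\set{C}$. Since $\set{U}$ is compact and $\mu - s\sigma$ is continuous, the maximum in \eqref{eq:prob-cbf-constraint} is attained at some $u^*(x)\in\set{U}$, and
\begin{equation*}
\mu(x,u^*(x)) - s\sigma(x,u^*(x)) \;\ge\; -\hat{\gamma}(h(x)).
\end{equation*}
By Assumption~\ref{as:well-calibrated}, the event $\dot{h}(x,u^*(x)) \ge \mu(x,u^*(x)) - s\sigma(x,u^*(x))$ has probability at least $1-\delta$, and on this event the deterministic chain above gives $\dot{h}(x,u^*(x)) \ge -\hat{\gamma}(h(x))$. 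Hence at every $x\in\set{C}$ there exists a $u\in\set{U}$ satisfying the true CBF condition with probability at least $1-\delta$, as claimed.

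The step that requires the most care is the existence argument: one must verify that the boundary hypothesis suffices not only to rule out $\hat{\gammalb}(0) > 0$ but also to allow a \emph{strictly increasing} continuous function through the origin that dominates $\hat{\gammalb}$ on $[0,h_{\max}]$. Since $\hat{\gammalb}$ need not be monotone or even continuous in $c$, the cleanest route is to upper-bound it by its running supremum plus a small strictly increasing term (e.g., an affine offset), which is clearly realizable within the parameterized family used in~\eqref{eq:kappa_opt}. Everything else reduces to routine application of Theorem~\ref{th:gamma-lower-bound} and Assumption~\ref{as:well-calibrated}.
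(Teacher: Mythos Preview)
Your proposal is correct and follows essentially the same approach as the paper: define $\hat{\gammalb}$ as the analogue of~\eqref{eq:gamma-constraint} with $\mu - s\sigma$ in place of $\dot h$, use the boundary hypothesis to ensure $\hat{\gammalb}(0)\le 0$ so that a dominating $\hat{\gamma}\in\K_e$ exists, invoke Theorem~\ref{th:gamma-lower-bound} mutatis mutandis to obtain~\eqref{eq:prob-cbf-constraint}, and then apply Assumption~\ref{as:well-calibrated} pointwise. Your treatment is in fact more explicit than the paper's (attainment of the maximizer $u^*(x)$ via compactness, and the running-supremum construction to guarantee a strictly increasing dominator), but the skeleton is identical.
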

\begin{proof}
    Let $\hat{\gammalb}(c) = - \min_{x \in \set{C}_c} \max_{u \in \set{U}} \mu(x, u) - s \sigma(x, u)$ similar to~\eqref{eq:gamma-constraint}. 
    Due to the non-negativity of $\mu(x, u) - s \sigma(x, u)$ for all $x \in \partial \set{C}$, we can find a $\hat{\gamma} \in \K_e$ such that $\hat{\gammalb}(c) \leq \hat{\gamma}(c)$ for all $c\in \left[ 0, h_{\text{max}} \right]$. Following a similar argument as in the proof for~\autoref{th:gamma-lower-bound}, one can show that~\eqref{eq:prob-cbf-constraint} is feasible for all $x \in \set{C}$. Furthermore, by~\autoref{as:well-calibrated}, $\text{Pr}(\dot{h}(x, u) \geq \mu(x, u) - s \sigma(x, u)) \ge 1-\delta$. For all $x \in \set{C}$, the CBF condition $\dot{h}(x, u) \geq -\hat{\gamma}(h(x) )$ can be then satisfied by the true system  for some $u\in\set{U}$ with a probability of at least $(1 - \delta)$. 
\end{proof}

We can use this result to formulate a safety filtering optimization for the uncertain system similar to that defined in~\eqref{eqn:cbf_qp}. 
For this, we substitute the true Lie derivative $\dot{h}(x, u)$ in constraint~\eqref{eqn:cbf_constraint} with the estimated lower bound on the Lie derivative $\mu(x, u) - s \sigma(x, u)$, which yields
\begin{subequations}
    \label{eqn:cbf_variance}
	\label{eqn:cbf_qp_variance}
	\begin{align}
	\u_\text{s}(\x) = \underset{\u \in \set{U}}{\text{argmin}} & \quad \frac{1}{2} \lVert \u - \pi(x) \rVert_2^2 \\ \text{s.t.} & \quad \mu(x, u) - s \sigma(x, u) \geq - \hat{\gamma}(h(x))\,. \label{eqn:cbf_constraint_variance}
	\end{align}
\end{subequations}

Extension to control invariance guarantees over a finite time interval can be considered in future work~\cite{lew2021problem}. 

\begin{remark}
    In the case where the non-negativity of $\mu(x, u) - s \sigma(x, u)$  is not satisfied for all $x \in \partial \set{C}$, one may potentially render a subset of $\set{C}$ safe if there exists a $c \in \left(0,h_{\text{max}}\right]$ such that $\hat{\gammalb}(c) \ge 0 $ holds (see the second case in \autoref{fig:cbf-level-set-invariance} and \autoref{th:superlevel-set-invariance}). If such $c\in \left(0,h_{\text{max}}\right]$ does not exist, then one may only be able to render a larger set safe (see the third case in \autoref{fig:cbf-level-set-invariance}); in this case, we are no longer able to provide closed-loop safety guarantees, and the probabilistic model needs to be retrained using a larger dataset.
\end{remark}

\subsection{Learning Probabilistic Lie Derivative Models}
\label{sec:learning}
Suppose we are given sampled trajectory data as a training dataset. 
Then, we can determine a probabilistic model of $\dot{h}$ based on data using standard supervised learning methodology. 
The inputs to the probabilistic model are the state~$x$ and the control input~$u$, and the output of the model is an estimate of the Lie derivative $\dot{\hat{h}}(x, u)$.  
In practice, such a dataset could be collected through expert demonstrations. 

In the following, we consider an ensemble of $M > 1$ function approximators as our probabilistic model. For example, this could be an ensemble of deep neural networks with different weight initializations trained on the same training data set $\mathcal{D}_{\text{train}}$. 
We can retain the control affine structure using an ensemble of deep neural networks.
For details on retaining the control affine structure with GPs, see~\cite{Castaneda2021acc}.
We choose each function approximator as $\dot{\hat{h}}_i(x, u) = \alpha_i(x) + \beta_i(x) u$ with $\alpha_i(x)$ and $\beta_i(x)$ learned jointly using supervised learning. 
Then, the mean function is $\mu(x, u) = \bar{\alpha}(x) + \bar{\beta}(x) u = \frac{1}{M} \sum_{i = 1}^M \dot{\hat{h}}_i(x, u)$ and the variance function is $\sigma^2(x, u) = \frac{1}{M - 1} \sum_{i = 1}^M (\dot{\hat{h}}_i(x, u) - \mu(x, u))^2$. 

The variance function is no longer affine in $u$ and solving~\eqref{eqn:cbf_qp_variance} generally requires nonlinear programming. 
In the special case where the set $\set{U}$ is convex and polytopic, we can reformulate the constrained optimization problem in~\eqref{eqn:cbf_qp_variance} as a second-order cone program~(SOCP) with a linear objective and two second-order cone constraints as in~\cite{Castaneda2021acc}. 
We have $\sigma(x, u) = \lVert L(x) u + l(x) \rVert_2$, where $L(x)$ is the decomposition of a positive semidefinite matrix 
$S(x)= L^\intercal(x) L(x) = \frac{1}{M-1}\sum_{i = 1}^M(\beta_i(x) - \bar{\beta}(x))^\intercal(\beta_i(x) - \bar{\beta}(x)) \succeq 0$ and $l(x) = \sqrt{\frac{1}{M-1}\sum_{i = 1}^M(\alpha_i(x) - \bar{\alpha}(x))^2}$. Plugging this into~\eqref{eqn:cbf_constraint_variance} yields a second-order cone constraint. The quadratic objective in~\eqref{eqn:cbf_qp_variance} can be replaced by the second-order cone constraint $\frac{1}{\sqrt{2}}\lVert u - \pi(x) \rVert_2 \leq j$ and by minimizing the linear objective $j$. 

\begin{remark}
    For uncertain systems, the Lie derivative needs to be estimated numerically~\cite{taylor2020a}. The resulting estimation error can be accounted for in the probabilistic model~\cite{l4dc22}. 
\end{remark}

\section{Evaluation Results}
\subsection{Simulation Results}
\label{sec:Simulation}
We evaluated the feasibility under input constraints of our proposed offline approach compared to the online adaptive approach presented in~\cite{xiao2022, Zeng2021}. We consider the damped inverted pendulum with the nonlinear control affine dynamics
    $\dot{x}_1 = x_2 \,, ~ \dot{x}_2 = 1 / (m l^2) \left(- m g l \sin (x_1) - b x_2 + u \right)\,,$
with input constraints $\set{U} = \{u \in \R \vert -\SI{15}{\newton \meter} \leq u \leq \SI{15}{\newton \meter} \}$, mass~$m = \SI{1}{\kilogram}$, length~$l = \SI{0.5}{\meter}$, damping factor~$b = \SI{0.1}{\frac{\newton}{\meter}}$, and the gravitational constant~$g = \SI{9.81}{\frac{\meter}{\second^2}}$. We consider a CBF candidate $h(x)$ that is a valid CBF for the pendulum system without input constraints. 
We assume full model knowledge and the uncertified control policy is $\pi(x) = 0$.

Our approach assesses feasibility under input constraints offline before employing the safety filter online. 
Based on our proposed approach and the remarks on the implementation for convex polytopic input sets~$\set{U}$ in~\autoref{sec:input-constraints}, we verify that the superlevel set of $h(x) = c = 0.005$ as $\Tilde{\set{C}}_{c} = \{ x \in \set{X} \vert h(x) - c \geq 0 \}$~(dark gray in~\autoref{fig:pendulum}) can be rendered control invariant under the constraint~$\set{U}$.
In contrast, the online adaptive approach employs a strategy that relies on adapting the steepness of the lower bound in the CBF condition online. 
However, using this strategy for a candidate safe set $\set{C}$~(light gray in~\autoref{fig:pendulum}) that has not been verified \textit{a priori} does not guarantee control invariance.
On the boundary of an uncertifiable safe set, the lower bound on the CBF condition cannot be adapted and is zero by definition. In such a case, an online adaptive approach is insufficient for safety.
In simulation, we find that for initial conditions starting in our verified control invariant safe set~$\Tilde{\set{C}}_{0.005}$, the adaptive approach may leave this verified safe set and can even leave the larger uncertified candidate safe set~$\set{C}$, see~\autoref{fig:pendulum}.

\begin{figure}[tb]
    \centering
 \input{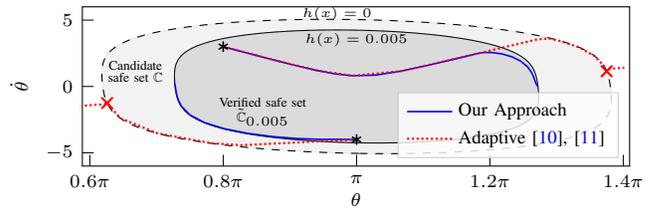}
 \vspace*{-5mm}
    \caption{Simulation of an inverted pendulum in closed-loop with safety filters using either our proposed offline input constraint verification (solid blue trajectories) and the online adaptive modification~\cite{xiao2022, Zeng2021} (dotted red trajectories) given a CBF candidate~$h$. 
    We use our approach to verify the control invariance of a superlevel set $\Tilde{\set{C}}_{0.005}$ (dark gray) of the CBF candidate~$h$ (zero superlevel set $\set{C}$ in light gray).
    Starting from the same initial conditions (indicated by stars) inside the verified safe set, our approach guarantees control invariance, while the online adaptive approach cannot guarantee safety and results in the system leaving the candidate safe set $\set{C}$ as indicated by the red crosses.}
    \label{fig:pendulum}
\end{figure}

\subsection{Quadrotor Experiments}
\label{subsec:quadrotor_experiments}
We validated our proposed method through physical experiments on a miniature quadrotor, the Crazyflie. 2.1.
In our experiments, our proposed safety filter for an uncertain input-constrained system prevents the falling quadrotor from crashing into the floor. 
In our experiment, we constrain the motion of the quadrotor to its $z$-axis by stabilizing its attitude. This gives a system representation with state $x = \begin{bmatrix} z_{\text{pos}} & z_{\text{vel}} \end{bmatrix}^\intercal \in \mathbb{X} \subseteq \mathbb{R}^2$ and the collective delta thrust $\Delta u \in \mathbb{U} \subseteq \mathbb{R}$ as the input. The delta thrust is defined as  $\Delta u = u - u_0 \in \left[-0.086mg, +0.086mg \right]$
where $u_0$ is the collective thrust to achieve hover, $m = \SI{0.03}{\kilogram}$ is the quadrotor's mass, and $g = 9.8~\text{m}/ \text{s}^{2}$ is the gravitational constant. 
The quadrotor position $z_{\text{pos}}$ and velocity $z_{\text{vel}}$ are estimated by a Luenberger observer using position measurements from a motion capture system.
We collect state and input data by sending sinusoidal inputs with constant amplitudes and frequencies. The input constraint set $\set{U}$ is the closed interval of the minimum and maximum applied inputs during the data collection. We use a nominal double integrator linear time-invariant~(LTI) model and, from the collected data, fit an ensemble of five neural networks~(NN) to the Lie derivative residual to augment the safety filter. Given a CBF candidate $h(x) = 1 - (x - x_{\text{c}})^{\intercal} P (x - x_{\text{c}})$ with $P$ being symmetric and positive definite and $x_c$ the center of the safe set, we apply our approach from~\autoref{sec:learning_prob_cont_inv_cond} to verify the zero-superlevel set~$\set{C}$ of the CBF candidate using the learned Lie derivative residual with $s=3$ under input constraints~$\set{U}$. Using Chebyshev’s inequality, this yields $\delta(s) = \frac{1}{s^2} = \frac{1}{9}$. During the experiments, we start from hovering with an approximate initial condition $x_{0} = [1.4, 0.0]^{\intercal}$. Then, an uncertified control policy $\pi(x) = -0.03mg$ is applied with the SOCP-based safety filter~\eqref{eqn:cbf_variance}. 
After each execution, we verified the satisfaction of~\autoref{as:well-calibrated} for our proposed approach by determining the percentage of predicted Lie derivative residuals inside the confidence interval given by $s = 3$.
The experiment results are presented in~~\autoref{fig:state_trajecotry} and~\autoref{fig:quad-h-dot}. A video of this experiment can be found here: \href{http://tiny.cc/CBF}{\texttt{http://tiny.cc/CBF}}.

The state trajectories without safety filter and with safety filter for the LTI system and the learned Lie derivative are shown in~\autoref{fig:state_trajecotry}. Without the safety filter, the quadrotor leaves the safe set and hits the floor. With the safety filter that leverages the learned Lie derivative and the nominal LTI system, the system strictly stays inside the safe set, while the safety filter solely relying on the LTI system fails. The learned Lie derivative using the ensemble of neural networks allows us to explicitly account for uncertainty in the system, including model mismatch and disturbances in real-world systems. The effect of different linear class-$\K_e$ functions can be seen in~\autoref{fig:quad-h-dot}. 
Class-$\K_e$ functions define the tolerable Lie derivative~$\dot{h}$ to guarantee safety when the system approaches the safe set's boundary.
The quadrotor approaches the boundary of the safe set (and therefore the ground) faster for less conservative class-$\K_e$ like $\gamma_3$. A less steep function~(e.g., as given by~$\gamma_1$) achieves a more conservative behavior. In this case, the optimization of the class-$\K_e$ is not restrictive, as the maximum realizable Lie derivative is positive for all level sets $h(x) \in \left[0.0, 0.7\right]$. However, the class-$\K_e$ must be carefully designed to achieve safe operation at discrete time steps.

\begin{figure}[tb]
    \centering
    \input{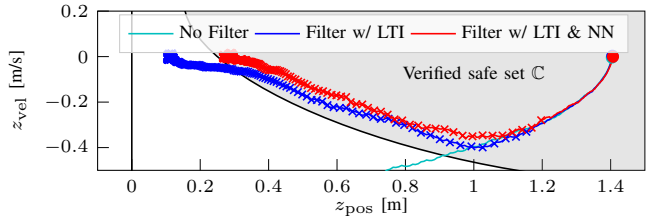}
    \vspace*{-2mm}
    \caption{Quadrotor state trajectories under unsafe dropping input (cyan) and CBF-augmented input using an LTI system (blue) and an LTI system with learned Lie derivative residual (red). The safe set with its interior~(light grey) and its boundary (black) is defined as an ellipse. Solid circles indicate the initial states. Crosses indicate the states where the safety filter is active. }
    \label{fig:state_trajecotry}
    \vspace*{-2mm}
\end{figure}

\begin{figure}[tb]
    \centering
    \input{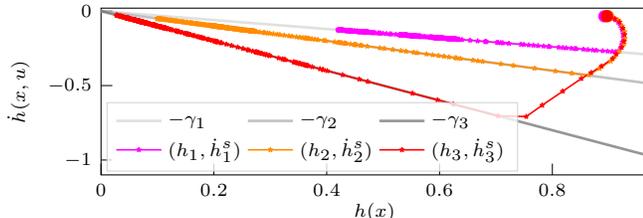}
    \vspace*{-2mm}
    \caption{Real-world trajectories of $h$ and $\dot{h}^s = \dot{h}_{\text{nominal}}(x, u) + \mu(x, u) - s \sigma(x, u)$, where $\dot{h}_{\text{nominal}}$ is the nominal Lie derivative and $s = 3$, for three different linear extended class-$\K_e$ functions~(in different shades of gray). Solid circles indicate the initial states.}
    \label{fig:quad-h-dot}
\end{figure}

\section{Conclusion}
In this paper, we proposed a systematic approach for learning valid CBF conditions for uncertain systems subject to input constraints. 
Based on known dynamics, we first derived a sufficient condition on the class-$\K_e$ function for validating a CBF safety filter design under input constraints. We proposed an approach to determine optimal class-$\K_e$ functions for a given CBF. Building on these results, we developed a learning-based method to construct probabilistic control invariance conditions for input-constrained uncertain systems. Given the probabilistic CBF condition, we can design a CBF safety filter to guarantee the system's safety with high probability. 
In simulation and real-world robotic experiments, we demonstrated the efficacy of the learned safety filters and showed that optimized class-$\K_e$ functions can be less conservative and are significant for designing safe real-world operation.

\balance

\bibliographystyle{./IEEEtranBST/IEEEtran}
\bibliography{./IEEEtranBST/IEEEabrv,./references}

\balance
\end{document}